\def\Repposen{\mathrm{Rep}^k_{\tikz{\useasboundingbox(.12,0) rectangle (.5,.1);\node[right] at (0,.12) {\tiny \it pos\hspace{-.1mm}.};\node[right] at (0,0) {\tiny \it en.};}}}
\def\rrarrow{  \hspace{.05cm}\mbox{\,\put(0,-2){$\rightarrow$}\put(0,2){$\rightarrow$}\hspace{.45cm}}}
\def\cA{\mathcal A}
\def\cI{\mathcal I}
\def\cT{\mathcal T}
\newtheorem*{defn}{Definition}
\newtheorem{thm}{Theorem}
\newtheorem{lem}[thm]{Lemma}
\newtheorem*{corollary*}{Corollary}
\newtheorem*{lem*}{Lemma}
\theoremstyle{remark}
\newtheorem{remark}[thm]{Remark}
\newtheorem*{remark*}{Remark}
\begin{document}

%\begin{talk}{Andr\'e Henriques}
\author{Andr\'e Henriques} %
\title{Conformal nets are factorization algebras} %

\begin{abstract}
We prove that conformal nets of finite index are an instance of the notion of a factorization algebra.
This result is an ingredient in our proof that, for $G=SU(n)$, the Drinfel'd center of the category of positive energy representations of the based loop group is
equivalent to the category of positive energy representations of the free loop group.
\end{abstract}

\maketitle

\section{Introduction}

In this note, we prove that conformal nets of finite index (Definitions 1.1 and 3.1 in \cite{CN1}) form an instance of the notion of a factorization algebra.
Our main result, Theorem \ref{thm: easy thm}, is a key ingredient in the proof, announced in \cite{CS(pt)}, that the category of solitons of a finite index conformal net is a bicommutant category.

Our main theorem is an analog, within the coordinate-free setup of \cite{CN1}, of the additivity property of conformal nets. % proven in \cite{FredenhagenJorss}.
Let $\cA$ be a conformal net on $S^1$, let $I\subset S^1$ be a closed interval, and let $\{I_i\subset I\}$ be collection of closed intervals whose interiors cover that of~$I$.
\emph{Additivity} is the statement that the von Neumann algebras $\cA(I_i)$ then generate a dense subalgebra in $\cA(I)$:
\[
\bigcup \mathring I_i=\mathring I\quad\Rightarrow\quad \bigvee \cA(I_i)=\cA(I).\qquad(\text{additivity})
\]
The additivity property of chiral conformal nets was proven in \cite{FredenhagenJorss}.
If one takes finitely many intervals $I_i$ whose union is $I$, then the corresponding property is called \emph{strong additivity}:
\[
\bigcup I_i=I\quad\Rightarrow\quad \bigvee \cA(I_i)=\cA(I).\qquad(\text{strong additivity})
\]
It is a result of Longo--Xu that chiral conformal nets of finite index satisfy strong additivity \cite[\S5]{LongoXuDichotomy}.

Let now $I$ be an abstract interval, and $\{I_i\subset I\}$ a finite collection of multi-intervals (a multi-interval is a finite disjoint union of intervals)
satisfying $\bigcup I_i\times I_i=I\times I$.
Equivalently, this is the requirement that for every pair of points $p,q\in I$ there exists an element of the cover that contains both $p$ and $q$.
In  Theorem \ref{thm: easy thm}, we prove that for every coordinate free conformal net $\cA$ of finite index, not necessarily chiral\footnote{A coordinate-free conformal net is called \emph{chiral} if the action of the rotation group on its vacuum sector has positive energy.}, we have:
\[
\bigcup\;\! I_i\,{\times}\, I_i=I\,{\times}\, I \quad\Rightarrow\quad \mathrm{colim}\, \cA(I_i)=\cA(I).\qquad(\text{factorization algebra})\hspace{-.5cm}
\]
The colimit which appears in the right hand side, informally denoted $\mathrm{colim}\, \cA(I_i)$, is that of a diagram involving the algebras $\cA(I_i)$ and $\cA(I_i\cap I_j)$.
(The colimit is defined by a universal property in the category of von Neumann algebras and normal $*$-homomorphisms.)
That diagram is written out in the left hand side of equation \eqref{eq: colimit}, below.
Here, the equation ``$\mathrm{colim}\, \cA(I_i)=\cA(I)$'' is the statement that the natural inclusions $\cA(I_i)\to \cA(I)$ extend to an isomorphism ${\mathrm{colim}\, \cA(I_i)\to\cA(I)}$.

\begin{remark*}
The category of von Neumann algebras and normal $*$-homomorphisms is cocomplete \cite[Prop.\,5.7]{Kornell} (see also \cite[\S7]{Guichardet}).
\end{remark*}

\section{Factorization algebras}

Let $\mathsf{Man}^n$ be the category whose objects are $n$-dimensional manifolds and whose morphisms are embeddings.
We equip it with the symmetric monoidal structure given by disjoint union.
A collection of open subsets $\{U_i\subset M\}$ of a manifold $M$ is a \emph{Weiss cover} if for every finite subset $S\subset M$, there exists an index $i$
such that $S\subset U_i$ \cite[Chapt.\,6]{CostelloGwilliam}.
Equivalently, being a Weiss cover means that for every $n\in \mathbb N$, the condition $\bigcup U_i^n=M^n$ is satisfied.
Let $\mathsf C$ be a symmetric monoidal category.

\begin{defn}[{\cite[Chapt.\,6]{CostelloGwilliam}}]
An $n$-dimensional $\mathsf C$-valued factorization algebra is a symmetric monoidal functor $\cA:\mathsf{Man}^n\to\mathsf C$
which is a co-sheaf with respect to Weiss covers.
\end{defn}

\noindent Here, being a co-sheaf with respect to Weiss covers means that, for every Weiss cover $\{U_i\;\!{\subset}\;\! M\}$, the natural~map
\begin{equation}\label{eq: colimit}
\quad\mathrm{colim}\left(\tikz[yscale=.5, xscale=.9, baseline=32]{\node[scale=.9](a) at (0,5) {$\cA(U_1\cap U_2)$};\node[scale=.9](b) at (0,4) {$\cA(U_1\cap U_3)$};\node[scale=.9](c) at (0,3) {$\cA(U_2\cap U_3)$};\node[scale=.9](d) at (0,2) {$\cA(U_1\cap U_4)$};\node[scale=.9](e) at (0,1) {$\cA(U_2\cap U_4)$};\node[scale=.9] at (0,0) {$\vdots$};\node[scale=.9](A) at (4,4.5) {$\cA(U_1)$};\node[scale=.9](B) at (4,3.5) {$\cA(U_2)$};\node[scale=.9](C) at (4,2.5) {$\cA(U_3)$};\node[scale=.9](D) at (4,1.5) {$\cA(U_4)$};\node[scale=.9] at (4,.5) {$\vdots$};\draw[->] (a) -- (A);\draw[->] (a) -- (B);\draw[->] (b) -- (A);\draw[->] (b.east)+(0,-.2) -- (C);\draw[->] (c) -- (B);\draw[->] (c) -- (C);\draw[->] (d.east)+(0,.15) -- (A);\draw[->] (d) -- (D);\draw[->] (e.east)+(0,.3) -- (B);\draw[->] (e) -- (D);}
\right)\longrightarrow \cA(M)
\end{equation}
is an isomorphism.
For later notational convenience, we abbreviate the left hand side of \eqref{eq: colimit} as $\mathrm{colim}(\{\cA(U_i\cap U_j)\}\rrarrow\{\cA(U_i)\})$.

In this paper, we are interested in $1$-dimensional factorization algebras (or rather, a small variant of the notion of $1$-dimensional factorization algebra) with values in the category of von Neumann algebras and normal $*$-homomorphisms.

An \emph{interval} is an oriented $1$-manifold diffeomorphic to $[0,1]$.
A \emph{multi-interval} is a finite disjoint union of intervals.
Let $\mathsf{INT}^*$ be the category whose objects are multi-intervals and whose morphisms are orientation preserving embeddings,
and let $\mathsf{INT}\subset \mathsf{INT}^*$ be its full subcategory of intervals.
Let $\mathsf{VN}$ denote the category of von Neumann algebras and normal $*$-homomorphisms, equipped with the symmetric monoidal structure given by spatial tensor product.
By the split property \cite[Def.\,1.1]{CN1}, a conformal net can be viewed as a symmetric monoidal functor $\cA:\mathsf{INT}^*\to\mathsf{VN}$.

We introduce a variant of the notion of Weiss cover that accounts for the fact that morphisms in $\mathsf{INT}^*$ are not open but rather closed inclusions.
Given a topological space $X$, a \emph{Weiss $c$-cover} is
a family of closed subsets $\{V_i\subset X\}$ that satisfies $\bigcup \mathring V_i^n=X^n$ for every $n\in \mathbb N$.
Here, $\mathring V_i$ denotes the \emph{relative} interior of $V_i$ inside $X$, i.e., the largest subset of $V_i$ which is open in $X$.
(For example, the relative interior of $[0,1]$ inside $[0,2]$ is the half-open interval $[0,1[$.)

Throughout this paper, all conformal nets are assumed irreducible, i.e., all the algebras $\cA(I)$ are assumed to be factors
(we work with conformal nets in the sense of \cite[Def.\,1.1]{CN1}).
The following statement expresses the idea that conformal nets are factorization algebras:

\begin{thm}\label{thm: main theorem}
Let $\cA:\mathsf{INT}^*\to\mathsf{VN}$ be a conformal net of finite index.
Then $\cA$ is a co-sheaf with respect to Weiss $c$-covers.
Namely, for every multi-interval $I$, and every Weiss $c$-cover of $I$ by multi-intervals $I_i\subset I$, the natural map
\[
q:\,\mathrm{colim}\big(\{\cA(I_i\cap I_j)\}\rrarrow\{\cA(I_i)\}\big)\,\,\to\,\,\cA(I)
\]
is an isomorphism.
\end{thm}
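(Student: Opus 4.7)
The plan is to separate the claim into the surjectivity of $q$---which will follow from strong additivity---and its injectivity---which is where the finite-index hypothesis enters. The map $q$ itself arises from the universal property applied to the cocone of inclusions $\cA(I_i)\hookrightarrow\cA(I)$, which is compatible with the diagram by functoriality of $\cA$. For surjectivity, the Weiss $c$-cover condition specialised to $n=1$ yields $\bigcup\mathring I_i=I$, and by compactness of $I$ finitely many of the $\mathring I_i$ suffice; then the strong additivity theorem of Longo--Xu \cite[\S5]{LongoXuDichotomy}, combined with the split property to pass from the single-interval case to multi-intervals, gives $\bigvee_i\cA(I_i)=\cA(I)$. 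Since the image of $q$ is a von Neumann subalgebra of the factor $\cA(I)$ containing every $\cA(I_i)$, it must be all of $\cA(I)$.

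For injectivity I would proceed by a sequence of reductions. First, using compactness of $I^n$ for each $n$ together with a cofinality argument, reduce to the case of a finite Weiss $c$-cover (``extra'' cover elements factor through members of a finite subfamily, so the two colimits agree). Second, use the split property to reduce to the case where $I$ is a single interval, splitting $\cA(I)$ along the connected components of $I$ and distributing the cover accordingly. The task is then to verify the universal property of the colimit: every compatible family of normal $*$-homomorphisms $\phi_i:\cA(I_i)\to B$ extends uniquely to a normal $*$-homomorphism $\phi:\cA(I)\to B$. Uniqueness is immediate from the surjectivity established above; existence is the technical heart.

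The main obstacle, and the step I expect to be delicate, is the construction of the extension $\phi$. My approach would be to use strong additivity to write each element of $\cA(I)$ as a weak limit of products of elements of the various $\cA(I_i)$, to define $\phi$ on such products via the $\phi_i$, and to verify that the resulting assignment is well-defined and normal. Here finite index is indispensable: each inclusion $\cA(I_i\cap I_j)\subset\cA(I_i)$ admits a unique normal conditional expectation, and together with the associated Jones projections and basic constructions these provide the bounded control that lets one turn an algebraic assignment into a genuinely normal $*$-homomorphism. Concretely, I expect to work in the vacuum Hilbert space of $\cA$ and to implement $\phi$ via bimodule gluing along the intersection algebras $\cA(I_i\cap I_j)$, with normality secured through finite-index estimates; verifying that the glued map is independent of choices and really extends each $\phi_i$ is exactly where the Weiss $c$-cover condition---which encodes ``pairwise compatibility'' of the cover at every arity---enters in a structural way.
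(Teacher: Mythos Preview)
Your overall architecture---surjectivity via strong additivity, injectivity by verifying that $\cA(I)$ satisfies the universal property of the colimit---matches the paper exactly, and your observation that uniqueness of the extension follows from surjectivity is correct. Two small corrections on the surjectivity side: in the coordinate-free framework of \cite{CN1} strong additivity is an \emph{axiom}, not a consequence of Longo--Xu; and the image of a normal $*$-homomorphism need only be argued to be \emph{dense}, after which \cite[Chapt.\,III, Prop.\,3.12]{Tak1} gives surjectivity. These are cosmetic.

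The genuine gap is in your existence argument for the extension $\phi$. Writing elements of $\cA(I)$ as weak limits of products from the $\cA(I_i)$ and hoping that conditional expectations or Jones projections will make the assignment well-defined is not a proof: you have no mechanism for showing that two such presentations of the same operator are sent to the same element of $B$, and no finite-index estimate of the kind you allude to is known to do this. The paper's route is quite different and more geometric. One first uses a diffeomorphism-covariance trick (Lemma~\ref{Lem.1.9 of CN1}) to reduce to a \emph{single standard} $2$-cover of $[0,5]$ by $[0,2]\cup[3,5]$ and $[1,4]$. Then, working on a Hilbert space $H$ carrying compatible actions, one rewrites $H$ via a chain of Connes-fusion isomorphisms, the decisive one being the finite-index identity
\[
L^2\cA_{14}\;\cong\;L^2\cA_{23}\boxtimes_{\cA([2,3]\cup_{\{2,3\}}[\overline{2,3}])}\bigl(L^2\cA_{14}\boxtimes_{\cA_{12\cup 34}}L^2\cA_{14}\bigr)
\]
from \cite[Cor.\,2.9]{CN2}. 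After a further gluing step (\cite[Cor.\,1.33]{CN1}, \cite[Lem.\,A.4]{CN2}) one arrives at an expression for $H$ in which a copy of $L^2\cA_{05}$ appears explicitly as a tensor factor, so the action of $\cA_{05}=\cA(I)$ is manifest. This is where finite index is actually used---not through conditional expectations or Pimsner--Popa bounds, but through the specific bimodule isomorphism above. Your phrase ``bimodule gluing along the intersection algebras'' points in the right direction, but the missing idea is the reduction to a two-element cover and the invocation of this vacuum-sector identity. Without it the proposal does not close.

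Two further remarks. Your reduction to connected $I$ via the split property is not how the paper proceeds: Theorem~$\ref{thm: main theorem}'$ treats a multi-interval directly by inserting tensor powers $(-)^{\otimes n}$ throughout the fusion computation, and needs the extra hypothesis that some cover element contains $\partial I$ (automatic for Weiss $c$-covers). And your cofinality reduction to a finite cover is unnecessary once one observes that every Weiss $c$-cover is already a $2$-cover admitting an element containing $\partial I$, so Theorem~\ref{thm: main theorem} is a formal consequence of Theorem~$\ref{thm: main theorem}'$.
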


\begin{remark*}
Here, it is crucial to use covers by \emph{closed} multi-intervals.
For a chiral conformal net $\cA$ on $S^1$,
the functor that sends an open multi-interval $U$ to the algebra $\cA(\bar U)$ is never a factorisation algebra, unless $\cA$ is trivial ($\bar U$ denotes the closure of $U$ in $S^1$),
because the map
\[
A:=\underset {\varepsilon\to 0}{\mathrm{colim}}\, \cA([\varepsilon,1-\varepsilon])\to \cA([0,1])
\]
is not an isomorphism.
Let $I_\varepsilon\subset S^1$ be the image of $[\varepsilon,1-\varepsilon]$ under the exponential map $t\mapsto e^{2\pi i t}:\mathbb R\to S^1$.
The obvious isomorphisms $\cA([\varepsilon,1-\varepsilon])\to \cA(I_\varepsilon)$ followed by the standard actions of $\cA(I_\varepsilon)$ on the vacuum Hilbert space $H_0$
yield an irreducible representation of $A$ on $H_0$.
However, $\cA([0,1])$ is a $\mathit{III}_1$ factor, which admits no irreducible representations \cite[Thm.\,2.13]{GabbianiFrohlich}.
\end{remark*}

We can sharpen the above result a little bit.
Given a compact topological space $X$, a collection $\{V_i\subset X\}_{i\in \cI}$ of closed subsets is called a \emph{2-cover} if there exists a finite subset $\cI'\subset \cI$ such that $\bigcup_{i\in \cI'} V_i^2=X^2$.
Any Weiss $c$-cover is a $2$-cover, and Theorem~$\ref{thm: main theorem}$ is a formal consequence of the following stronger result: %, that we will prove in the next section:

\addtocounter{thm}{-1}
\begin{thm}\hspace{-.3cm}$'$
Let $\cA$ be a conformal net of finite index,
let $I$ be a multi-interval, and let $\{I_i\subset I\}$ be a $2$-cover by multi-intervals.
Assume furthermore that there exists an element of the $2$-cover that contains $\partial I$.
Then the natural map
\[
q:\,\mathrm{colim}\big(\{\cA(I_i\cap I_j)\}\rrarrow\{\cA(I_i)\}\big)\,\,\to\,\,\cA(I)
\]
is an isomorphism.
\end{thm}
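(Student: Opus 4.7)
The plan is to reduce Theorem~$1'$ to a pairwise pushout identity for $\cA$, which in turn follows from strong additivity together with the finite-index structure. First, a $2$-cover admits a finite sub-collection $\{I_1,\ldots,I_n\}$ witnessing the defining property, and discarding the remaining elements does not change the colimit since they are dominated by retained ones. Relabel so that $I_1\supset\partial I$.

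The induction proceeds by absorbing elements one at a time. At each step, pick $I_k$ such that $J:=I_1\cup I_k$ is a multi-interval and $\{I_1,I_k\}$ is actually a $2$-cover of $J$; the hypothesis $I_1\supset\partial I$ ensures that such an ``outward-from-the-boundary'' absorption can be carried out. Replace $I_1$ by $J$ and drop $I_k$, yielding a new cover that still contains $\partial I$ in its first element and is still a $2$-cover of $I$. Granting that the colimit is unchanged by this operation, iterating it yields the trivial cover $\{I\}$, whose colimit is $\cA(I)$. That the colimit is unchanged reduces to a \emph{pairwise pushout lemma}
\[
\cA(I_1)\,*_{\cA(I_1\cap I_k)}\,\cA(I_k)\;\cong\;\cA(J)
\]
together with its analogues computing $\cA(J\cap I_\ell)$ from $\cA(I_1\cap I_\ell)$ and $\cA(I_k\cap I_\ell)$, which are used to glue the remaining intersection terms.

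The pairwise pushout lemma is the technical heart. Surjectivity of the natural map $\cA(I_1)*_{\cA(I_1\cap I_k)}\cA(I_k)\to\cA(J)$ follows directly from Longo--Xu strong additivity, which yields $\cA(J)=\cA(I_1)\vee\cA(I_k)$. Promoting this to a pushout identification requires showing that every compatible pair of normal $*$-homomorphisms out of $\cA(I_1)$ and $\cA(I_k)$ agreeing on $\cA(I_1\cap I_k)$ extends uniquely to $\cA(J)$; here the finite-index hypothesis is essential, providing Pimsner--Popa-basis expansions that express any element of $\cA(J)$ canonically in terms of $\cA(I_1)$ and $\cA(I_k)$, together with the split property to handle the multi-interval structure.

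I expect the main obstacle to be the pairwise pushout lemma itself, and in particular the careful bookkeeping for the local-to-global step over the intersections $J\cap I_\ell$ as the induction proceeds. Without the finite-index assumption, strong additivity would give only density and not the pushout universal property required for a colimit in $\mathsf{VN}$, so the hypothesis is genuinely essential at precisely this step.
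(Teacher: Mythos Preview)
Your inductive strategy has a structural gap that prevents it from getting off the ground. The condition you impose on the absorbing step---that $\{I_1,I_k\}$ be a $2$-cover of $J=I_1\cup I_k$---is equivalent to $I_1\subset I_k$ or $I_k\subset I_1$. Indeed, if neither inclusion holds, choose $p\in I_1\setminus I_k$ and $q\in I_k\setminus I_1$; then $(p,q)\notin I_1^2\cup I_k^2$. So each ``absorption'' is trivial and the induction never makes progress. The claim that $I_1\supset\partial I$ guarantees a nontrivial absorption is not correct: for $I=[0,10]$ with the $2$-cover $\{[0,1]\cup[9,10],\ [0,6],\ [4,10],\ [0,4]\cup[6,10]\}$, after the only possible (trivial) absorption $I_1\mapsto[0,4]\cup[6,10]$ there is no further $I_k$ with $\{I_1,I_k\}$ a $2$-cover of $I_1\cup I_k$.

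If instead you drop the $2$-cover requirement and attempt the bare pushout identity $\cA(I_1)*_{\cA(I_1\cap I_k)}\cA(I_k)\cong\cA(I_1\cup I_k)$, that statement is in general \emph{false}. For overlapping intervals $I_1=[a,c]$, $I_k=[b,d]$ with $a<b<c<d$, a pair of normal $*$-representations of $\cA(I_1)$ and $\cA(I_k)$ agreeing on $\cA([b,c])$ carries no information forcing $\cA([a,b])$ and $\cA([c,d])$ to commute, whereas they do commute inside $\cA([a,d])$; hence the pushout is strictly larger. The $2$-cover hypothesis in the theorem is precisely what supplies this commutation (see Lemma~\ref{prep. lem}), and your induction throws it away at each step. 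The appeal to Pimsner--Popa bases does not rescue this: such bases express elements of $\cA(J)$ over a subalgebra, but they do not manufacture the missing commutation relations in the target.

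By contrast, the paper does not induct on the size of the cover at all. It first closes the cover under sub-multi-intervals and uses the diffeomorphism trick (Lemma~\ref{Lem.1.9 of CN1}) to reduce to a single standard configuration $I=\bigsqcup^n[0,5]$ with the two specific elements $\bigsqcup^n([0,2]\cup[3,5])$ and $\bigsqcup^n[1,4]$ in the cover. The extension of compatible actions to $\cA(I)$ is then obtained directly via Connes fusion, using the finite-index isomorphism $L^2\cA_{14}\cong L^2\cA_{23}\boxtimes_{\cA_{\widehat{23}}}(L^2\cA_{14}\boxtimes_{\cA_{12\cup 34}}L^2\cA_{14})$ from \cite[Cor.\,2.9]{CN2}; this is where finite index genuinely enters. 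Surjectivity of $q$ follows from strong additivity, and injectivity from applying this extension lemma to a faithful representation of the colimit. No pairwise pushout lemma is needed, and the locality/commutation input is handled globally via Lemma~\ref{prep. lem} rather than pairwise.
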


When $I$ is connected, the statement of Theorem $\ref{thm: main theorem}'$ simplifies:

\begin{thm}\label{thm: easy thm}
Let $\cA$ be a conformal net of finite index,
let $I$ be an interval, and let $\{I_i\subset I\}$ be a $2$-cover by multi-intervals.
Then the natural map
\[
q:\,\mathrm{colim}\big(\{\cA(I_i\cap I_j)\}\rrarrow\{\cA(I_i)\}\big)\,\,\to\,\,\cA(I)
\]
is an isomorphism.
\end{thm}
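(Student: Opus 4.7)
My plan is to deduce Theorem~\ref{thm: easy thm} as a direct consequence of Theorem~$\ref{thm: main theorem}'$. The two statements have identical conclusions; the only difference in the hypotheses is that Theorem~$\ref{thm: main theorem}'$ demands, in addition to the $2$-cover condition, that some member of the cover contain the boundary $\partial I$. I will verify that, once $I$ is assumed to be a (connected) interval, this additional hypothesis is automatic.

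Since $I$ is diffeomorphic to $[0,1]$, its boundary $\partial I$ consists of exactly two points, the endpoints $p_0$ and $p_1$. By the definition of a $2$-cover, there is a finite $\cI'\subset\cI$ with $\bigcup_{i\in\cI'} I_i\times I_i = I\times I$. In particular, the pair $(p_0, p_1)\in I\times I$ lies in some $I_{i_0}\times I_{i_0}$, which forces $\{p_0,p_1\}\subset I_{i_0}$, i.e.\ $\partial I\subset I_{i_0}$.

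With this boundary-containment established, the hypotheses of Theorem~$\ref{thm: main theorem}'$ are all met, and its conclusion furnishes the desired isomorphism $q:\mathrm{colim}(\{\cA(I_i\cap I_j)\}\rrarrow\{\cA(I_i)\})\to\cA(I)$. Accordingly, there is no genuine obstacle here: the entire substantive content resides in Theorem~$\ref{thm: main theorem}'$, and the role of connectedness of $I$ is merely to trivialize the boundary-containment clause by exploiting the fact that a $2$-cover, by definition, controls \emph{pairs} of points — in particular the pair of endpoints of a connected interval. The only step that would warrant any further comment is the degenerate possibility $p_0=p_1$, which is excluded by the requirement that an interval be diffeomorphic to $[0,1]$.
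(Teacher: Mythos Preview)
Your reduction is mathematically sound: for a connected interval the boundary consists of exactly two points, and the $2$-cover condition immediately furnishes a member containing both. So Theorem~\ref{thm: easy thm} is indeed a special case of Theorem~$\ref{thm: main theorem}'$.

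However, this is not how the paper is organized, and in the context of the paper your argument is circular. The paper proves Theorem~\ref{thm: easy thm} \emph{first}, directly, via Lemma~\ref{the lem}: one shows that $q$ is surjective by strong additivity, and injective by taking a faithful representation of the colimit and invoking Lemma~\ref{the lem} to extend it to an action of $\cA(I)$. Lemma~\ref{the lem} in turn carries all the analytic content (the Connes-fusion manipulations using finite index, together with Lemmas~\ref{lem no 1}--\ref{prep. lem}). Only afterwards is Theorem~$\ref{thm: main theorem}'$ established, and its proof explicitly refers back to ``the proof of Theorem~\ref{thm: easy thm}'' for the reduction step and to ``the structure of the proof of Lemma~\ref{the lem}'' for the main computation. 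So invoking Theorem~$\ref{thm: main theorem}'$ to prove Theorem~\ref{thm: easy thm} reverses the paper's dependency graph.

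What your observation does buy is a clean explanation of why the extra boundary hypothesis in Theorem~$\ref{thm: main theorem}'$ disappears in the connected case; the paper leaves this implicit. If one wanted to reorganize the exposition so that Theorem~$\ref{thm: main theorem}'$ is proved first and self-containedly (which is possible, since its proof only borrows \emph{techniques} from Theorem~\ref{thm: easy thm}, not the statement itself), then your deduction would be a legitimate two-line corollary. As written, though, you have deferred all the work rather than done it.
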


\section{Proofs}

In this section, we present the proofs of the above theorems.
We first prove Theorem \ref{thm: easy thm}.
We then prove Theorem $\ref{thm: main theorem}'$ by a slight variation of the argument.
Theorem \ref{thm: main theorem} is then a formal consequence of Theorem $\ref{thm: main theorem}'$.
We begin with some lemmas.
We first note that, when working with multi-intervals, a $2$-cover induces a cover in the usual sense:

\begin{lem}\label{lem no 1}
Let $I$ be a multi-interval and let $\{I_i\subset I\}_{i\in\cI}$ be a 2-cover by multi-intervals.
Then $\bigcup \mathring I_i=\mathring I$.
\end{lem}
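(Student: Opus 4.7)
The plan is to argue by contraposition: if $p\in\mathring I$ is not contained in any $\mathring I_i$, I will exhibit a pair $(x,y)\in I\times I$ not lying in any $I_i\times I_i$, contradicting the 2-cover hypothesis. Crucially, that hypothesis is witnessed by a \emph{finite} subfamily $\cI'\subset\cI$ with $\bigcup_{i\in\cI'} I_i\times I_i = I\times I$, and I may restrict attention to $\cI'$ throughout; the argument hinges on this finiteness.

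The key step is a local analysis at $p$ that assigns to each $i\in\cI'$ a ``bad side'' $s_i\in\{+,-\}$ of $p$ with the property that a small open half-interval on side $s_i$ of $p$ is disjoint from $I_i$. Since $p\in\mathring I$, a small neighborhood of $p$ in $I$ is a genuine open interval, so both sides make sense. If $p\notin I_i$ then both sides work by closedness of $I_i$. Otherwise $p\in I_i\setminus\mathring I_i$, so $p$ is a boundary point of $I_i$ in $I$; because $I_i$ is a \emph{disjoint} union of finitely many closed intervals and $p$ is interior to $I$, the point $p$ is an endpoint of exactly one component, and the finitely many other components of $I_i$ sit at positive distance from $p$. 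Thus exactly one side of $p$ locally misses $I_i$, and I take that side as $s_i$, together with some radius $\delta_i>0$ on which the avoidance holds.

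Taking a uniform $\delta := \min_{i\in\cI'}\delta_i>0$, I then pick any $x$ on side $+$ of $p$ and any $y$ on side $-$ of $p$, both within distance $\delta$ and both inside $\mathring I$. For each $i\in\cI'$, the coordinate on side $s_i$ (either $x$ or $y$) lies outside $I_i$, so $(x,y)\notin I_i\times I_i$. Since $(x,y)\in I\times I$, this contradicts the 2-cover identity, and the lemma follows.

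I expect the only delicate point to be the case analysis verifying that each component of the multi-interval $I_i$ near $p$ sits on just one side of $p$ and that the remaining components do not accumulate at $p$; this is where the ``multi'' in multi-interval genuinely enters, via the disjointness and finiteness of components. Everything else is a routine combination of finitely many one-sided local estimates.
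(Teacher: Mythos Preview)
Your proof is correct and is essentially the contrapositive of the paper's direct argument. The paper picks sequences $x_n<p<y_n$ converging to $p$, applies the $2$-cover to each pair, and pigeonholes over the finite $\cI'$ to find a single $I_i$ containing $p$ in its interior; you instead assign each $I_i\in\cI'$ a bad side and combine via a uniform $\delta$, but the underlying mechanism---finiteness of $\cI'$ together with the local two-sidedness of $p$ and the component structure of multi-intervals---is the same.
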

\begin{proof}
By definition, $\bigcup_{i\in\cI'} I_i^2=I^2$ for some finite subset $\cI'\subset \cI$.
Given a point $p\in \mathring I$, pick sequences
$(x_n)$ and $(y_n)$ in $I$ converging to $p$ and satisfying $x_n<p<y_n$.
For every $n$, there exists an index $i\in\cI'$ such that $x_n$ and $y_n$ are both in $I_i$.
The set $\cI'$ being finite, there exists an $I_i$ that contains infinitely many $x_n$'s and $y_n$'s.
Since $I_i$ is a multi-interval, it contains $p$ in its interior.
\end{proof}

The next lemma is technical in nature. It is a generalisation of \cite[Lem.\,1.9]{CN1}. %\footnote{Lem.\,1.9 is stated inaccurately in the arXiv version of \cite{CN1}.}
Let $\cA$ be a conformal net (not necessarily of finite index) and let $I$ be a multi-interval:

\begin{lem}\label{Lem.1.9 of CN1}
Let $\cI=\{I_i\subset I\}$ be a collection of multi-intervals satisfying $\bigcup \mathring I_i=\mathring I$.
Let $\varphi\in\mathrm{Diff}(I)$ be a diffeomorphism in the connected component of the identity, and let $\hat I:=\varphi(I_0)$ for some $I_0\in\cI$.
Let $H$ be a Hilbert space equipped with actions $\rho_i:\cA(I_i)\to B(H)$ satisfying
\begin{enumerate}
\item 
$\rho_i|_{\cA(I_i\cap I_j)}=\rho_j|_{\cA(I_i\cap I_j)}:\cA(I_i\cap I_j)\to B(H)$.
\item
For every $I_j,I_k\in\cI$ and every intervals $J\subset I_j$, $K\subset I_k$ with disjoint interiors, 
the algebras $\rho_j(\cA(J))$ and $\rho_k(\cA(K))$ commute.
\end{enumerate}
Then the actions $\rho_i|_{\cA(\hat I\cap I_i)}$ of $\cA(\hat I\cap I_i)$ on $H$ extend (uniquely) to an action ${\hat\rho:\cA(\hat I)\to B(H)}$.
\end{lem}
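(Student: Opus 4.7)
The plan is to reduce the statement to iterated applications of \cite[Lem.\,1.9]{CN1} along an isotopy that carries $I_0$ to $\hat I$. Since $\varphi$ lies in the identity component of $\mathrm{Diff}(I)$, I would choose a smooth isotopy $(\varphi_t)_{t\in[0,1]}$ with $\varphi_0=\mathrm{id}$ and $\varphi_1=\varphi$, and set $\hat I_t:=\varphi_t(I_0)$. The goal becomes constructing a family of actions $\hat\rho_t:\cA(\hat I_t)\to B(H)$ compatible with the $\rho_i$'s, starting from the tautological $\hat\rho_0:=\rho_0$ and ending at the desired $\hat\rho:=\hat\rho_1$.

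By the cover hypothesis (cf.\ Lemma \ref{lem no 1}), the relative interiors $\mathring I_i$ cover $\mathring I$. Using compactness of $[0,1]$ and the Lebesgue number lemma, I would choose a partition $0=t_0<\cdots<t_N=1$ fine enough that, for each $k$, the symmetric difference $\hat I_{t_{k-1}}\triangle\hat I_{t_k}$, together with a small collar, is contained in the union of finitely many $\mathring I_{j_k^\alpha}$ with $I_{j_k^\alpha}\in\cI$. The inductive step extends $\hat\rho_{t_{k-1}}$ to $\hat\rho_{t_k}$ by applying \cite[Lem.\,1.9]{CN1} to the local configuration $\{\hat I_{t_{k-1}}, I_{j_k^1},\ldots,I_{j_k^m}\}$: the existing action $\hat\rho_{t_{k-1}}$ and the restrictions of the relevant $\rho_{j_k^\alpha}$ are compatible on overlaps by hypothesis~(1) and commute at a distance by hypothesis~(2), supplying precisely the input required by the original lemma.

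Uniqueness of $\hat\rho$ follows from the additivity of conformal nets \cite{FredenhagenJorss}: the relative interiors $\hat I\cap\mathring I_i$ cover $\mathring{\hat I}$, so $\cA(\hat I)$ is generated by the images of the $\cA(\hat I\cap I_i)$, and any normal $*$-homomorphism out of $\cA(\hat I)$ is determined by its restrictions to these subalgebras. The main obstacle I anticipate is propagating the compatibility hypothesis through the induction: at each step $k$ one must verify that $\hat\rho_{t_k}$ is compatible with \emph{all} of the $\rho_i$ (not only those used at step $k$), so that the induction can continue and the final $\hat\rho$ extends $\rho_i|_{\cA(\hat I\cap I_i)}$ for every $i$. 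Handling this cleanly will likely require strengthening the inductive hypothesis and invoking the uniqueness remark on each intermediate overlap.
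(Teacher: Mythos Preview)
Your plan shares the same core idea as the paper—exploit the fact that $\varphi$ is isotopic to the identity to transport the action $\rho_0$ to $\hat I$—but the paper's execution is more direct and avoids precisely the obstacle you flag. Rather than a continuous isotopy and black-box calls to \cite[Lem.\,1.9]{CN1} at each small step, the paper uses a \emph{fragmentation} of $\varphi$: after arranging that $\varphi$ fixes a neighbourhood of $\partial I$, one writes $\varphi=\varphi_1\circ\cdots\circ\varphi_n$ with $\mathrm{supp}(\varphi_s)\subset\mathring I_{i_s}$ for suitable $I_{i_s}\in\cI$. By the inner covariance axiom of \cite[Def.\,1.1\,(iv)]{CN1} each $\varphi_s$ is implemented by a unitary $u_s\in\cA(I_{i_s})$, and one \emph{defines} $\hat\rho$ in closed form by
\[
\hat\rho(a)\;:=\;u_1\cdots u_n\,\rho_0\big(\cA(\varphi^{-1})(a)\big)\,u_n^*\cdots u_1^*.
\]
Compatibility with every $\rho_\ell$ is then checked in one pass: for sufficiently small intervals $K\subset\hat I\cap I_\ell$ one shows $\hat\rho|_{\cA(K)}=\rho_\ell|_{\cA(K)}$ by a finite induction on $s$, where at each step either $K_{s-1}\subset I_{i_s}$ (use hypothesis~(1) and $\mathrm{Ad}(u_s)=\cA(\varphi_s)$) or $K_{s-1}\cap\mathrm{supp}(\varphi_s)=\emptyset$ (use hypothesis~(2) so that $u_s$ commutes past). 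Strong additivity then upgrades this from small $K$ to all of $\hat I\cap I_\ell$. This sidesteps your ``propagate compatibility through the induction'' problem entirely, since there is no intermediate $\hat\rho_{t_k}$ to keep track of.

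Two smaller points. First, your uniqueness argument cites additivity from \cite{FredenhagenJorss}, but that result is proved for chiral nets on $S^1$; in the coordinate-free setting of \cite{CN1} one does not assume positive energy, and the correct tool is \emph{strong additivity}, which is an axiom in \cite[Def.\,1.1]{CN1}. Second, your inductive step, as written, would need more than just covering the symmetric difference $\hat I_{t_{k-1}}\triangle\hat I_{t_k}$: to invoke \cite[Lem.\,1.9]{CN1} you would need each \emph{connected component} of the moving boundary to sit, together with a collar on both sides, inside a single $I_{j}$, and you would need to check that the enlarged family $\{\hat I_{t_{k-1}}\}\cup\cI$ still satisfies the hypotheses of that lemma. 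This is doable, but by the time you have made it precise you will essentially have rediscovered the explicit unitary formula above.
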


\begin{proof}
We write $\rho_0$ for the action of  $\cA(I_0)$ on $H$.
We may assume without loss of generality that $\varphi$ fixes a neighbourhood of $\partial I$.
Provided that is the case, we can write $\varphi$ as a product of diffeomorphisms $\varphi=\varphi_1\circ\ldots\circ \varphi_n$ with $\mathrm{supp}(\varphi_s)\subset \mathring I_{i_s}$ for some $I_{i_s}\in\cI$.
Let $u_s\in \cA(I_{i_s})$ be unitaries s.t. $\mathrm{Ad}(u_s)=\cA(\varphi_s)$ \cite[Def.\,1.1\,(iv)]{CN1}.
Identifying the elements $u_s$ with their images in $B(H)$, we set
\[
\qquad\hat\rho(a):=u_1\ldots u_n\rho_0\big(\cA(\varphi^{-1})(a)\big)u_n^*\ldots u_1^*.
\]
For every $I_\ell\in\cI$ and every sufficiently small interval $K\subset \hat I \cap I_\ell$, we will show that
\begin{equation}\label{eq:diff trick for nets - bis}
\hat\rho|_{\cA(K)}=\rho_\ell|_{\cA(K)}.
\end{equation}
Here, `sufficiently small' means that the intervals $K_s:=\varphi_s^{-1}(\ldots(\varphi_1^{-1}(K)))$ are contained in $I_{k_s}$ for some $I_{k_s}\in\cI$,
and that for every $s'\le n$ either $K_s\subset I_{i_{s'}}$ or $K_s\cap \mathrm{supp}(\varphi_{s'})=\emptyset$.

For every $s\le n$, we claim that
\begin{equation}\label{eq:diff trick for nets - ter}
\quad\qquad u_1\ldots u_s\rho_{k_s}\big(\cA(\varphi_s^{-1}\circ\ldots\circ \varphi_1^{-1})(a)\big)u_s^*\ldots u_1^* = \rho_\ell(a) \qquad \forall a\in \cA(K).
\end{equation}
Equation \eqref{eq:diff trick for nets - bis} is the special case $s=n$. % of the above equation.
We prove \eqref{eq:diff trick for nets - ter} by induction on $s$.
The base case ($s=0$, $k_0=\ell$) is trivial.
The induction step reduces to the equation
\[
\rho_{k_s}\big(\cA(\varphi_{s}^{-1})(b)\big) = u_{s}^* \rho_{k_{s-1}}(b)u_s,
\]
with $b=\cA(\varphi_{s-1}^{-1}\circ\ldots\circ \varphi_1^{-1})(a)$.
Recall that $b\in\cA(K_{s-1})$, $u_s\in \cA(I_{i_s})$ and that, by assumption, either $K_{s-1}\subset I_{i_s}$ or $K_{s-1}\cap \mathrm{supp}(\varphi_s)=\emptyset$.
In the first case, we have
\[
u_s^* \rho_{k_{s-1}}(b)u_s  = u_s^* \rho_{i_s}(b)u_s = \rho_{i_s}(u_s^* b u_s) = \rho_{i_s}\big(\cA(\varphi_s^{-1})(b)\big) = \rho_{k_s}\big(\cA(\varphi_s^{-1})(b)\big).
\]
In the second case, the elements $\rho_{k_{s-1}}(b)$ and $u_s$ commute:
\[
u_s^* \rho_{k_{s-1}}(b)u_s = \rho_{k_{s-1}}(b) = \rho_{k_s}(b) = \rho_{k_s}\big(\cA(\varphi_s^{-1})(b)\big),
\]
where the last equality holds since $b\in A(K_{s-1})$ and $\varphi_s$ acts trivially on $K_{s-1}$.
This finishes the proof of \eqref{eq:diff trick for nets - ter} and hence of \eqref{eq:diff trick for nets - bis}.
Finally, by strong additivity (which is one of the axioms in \cite[Def.\,1.1]{CN1}), it follows from \eqref{eq:diff trick for nets - bis} that
$\hat\rho(a)=\rho_\ell(a)$ for every $a\in \cA(\hat I \cap I_\ell)$.
\end{proof}

We now establish some assumptions under which the hypotheses of Lemma~\ref{Lem.1.9 of CN1} are satisfied:

\begin{lem}\label{prep. lem}
Let $I$ be a multi-interval, and
let $\cI=\{I_i\subset I\}$ be a $2$-cover.
Let $\rho_i:\cA(I_i)\to B(H)$ be actions satisfying $\rho_i|_{\cA(I_i\cap I_j)}=\rho_j|_{\cA(I_i\cap I_j)}$.
Then, for every $I_j,I_k\in\cI$ and every intervals $J\subset I_j$, $K\subset I_k$ with disjoint interiors, 
we have
\[
[\rho_j(\cA(J)),\rho_k(\cA(K))]=0.
\]
\end{lem}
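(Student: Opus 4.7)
\emph{Proof plan.} The approach is to subdivide $J$ and $K$ into finitely many small subintervals, chosen finely enough that every subinterval of $J$ together with every subinterval of $K$ lies inside a single multi-interval $I_i$ from the cover. Once this is arranged, locality for $\cA$ gives commutation at the subdivided level, and strong additivity (combined with normality of the $\rho_i$) propagates it up to $\cA(J)$ and $\cA(K)$.

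First, invoke the definition of $2$-cover to select a finite $\cI' \subset \cI$ with $\bigcup_{i\in \cI'} I_i^2 = I^2$. Form the finite set $E \subset I$ consisting of $\partial I$, $\partial J$, $\partial K$, and the boundaries (in $I$) of every connected component of every multi-interval $I_i$ with $i \in \cI'$, and let $P_1, \dots, P_N$ be the closures in $I$ of the connected components of $I \setminus E$. Each $P_\alpha$ is a closed subinterval of $I$ with nonempty relative interior, and by construction, for every $i \in \cI'$, either $P_\alpha \subset I_i$ or $\mathring P_\alpha \cap I_i = \emptyset$; the same dichotomy holds with $J$ or $K$ in place of $I_i$. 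In particular, $J$ is the union of those $P_\alpha$ it contains, and likewise for $K$.

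Next, fix a pair $(\alpha, \beta)$ with $P_\alpha \subset J$ and $P_\beta \subset K$. Since $\mathring J \cap \mathring K = \emptyset$ we have $P_\alpha \neq P_\beta$, so distinct interior points $p \in \mathring P_\alpha$ and $q \in \mathring P_\beta$ exist. The $2$-cover property applied to $\cI'$ produces an index $i(\alpha,\beta) \in \cI'$ with $p, q \in I_{i(\alpha,\beta)}$, and the dichotomy above upgrades this to $P_\alpha \cup P_\beta \subset I_{i(\alpha,\beta)}$. Since $P_\alpha$ and $P_\beta$ are subintervals of the multi-interval $I_{i(\alpha,\beta)}$ with disjoint interiors, locality for $\cA$ gives $[\cA(P_\alpha), \cA(P_\beta)] = 0$ inside $\cA(I_{i(\alpha,\beta)})$. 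The compatibility hypothesis, applied via $P_\alpha \subset I_j \cap I_{i(\alpha,\beta)}$ and $P_\beta \subset I_k \cap I_{i(\alpha,\beta)}$, gives $\rho_j|_{\cA(P_\alpha)} = \rho_{i(\alpha,\beta)}|_{\cA(P_\alpha)}$ and $\rho_k|_{\cA(P_\beta)} = \rho_{i(\alpha,\beta)}|_{\cA(P_\beta)}$, so $\rho_j(\cA(P_\alpha))$ and $\rho_k(\cA(P_\beta))$ commute in $B(H)$.

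Finally, strong additivity (an axiom of a conformal net in the sense of \cite[Def.\,1.1]{CN1}) implies that $\cA(J)$ is generated as a von Neumann algebra by the finitely many subalgebras $\{\cA(P_\alpha) : P_\alpha \subset J\}$, and similarly for $\cA(K)$. Normality of $\rho_j$ and $\rho_k$ then shows that $\rho_j(\cA(J))$ is generated by the $\rho_j(\cA(P_\alpha))$ and $\rho_k(\cA(K))$ by the $\rho_k(\cA(P_\beta))$, so the pairwise commutations established above yield $[\rho_j(\cA(J)), \rho_k(\cA(K))] = 0$. The step I expect to require the most care is the geometric one in the third paragraph: passing from the pointwise $2$-cover property (some $I_i$ contains any given pair of points) to the statement that some $I_i$ contains two specific whole subintervals $P_\alpha$ and $P_\beta$. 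Putting the boundaries of every component of every $I_i \in \cI'$ into $E$ is exactly what converts the pointwise condition into the needed interval-level statement via the dichotomy.
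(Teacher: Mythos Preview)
Your proof is correct and follows essentially the same approach as the paper's: reduce to a finite sub-$2$-cover, cut $J$ and $K$ at the boundary points of the $I_i$'s, use the $2$-cover property on interior points to force each resulting pair of subintervals into a single $I_i$, invoke locality there, and finish with strong additivity. The only differences are cosmetic: you include $\partial I$, $\partial J$, $\partial K$ in the cut set and spell out the either/or dichotomy explicitly, while the paper simply decomposes $J$ and $K$ by $S=\bigcup_i \partial I_i$ and leaves the upgrade from ``$I_i$ contains the two interior points'' to ``$I_i$ contains the two subintervals'' implicit.
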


\begin{proof}
We assume without loss of generality that the 2-cover is finite.
The finite set $S:=\bigcup_{i\in\cI} \partial I_i$ decomposes $J$ and $K$ into a finitely many intervals:
$J=J_1\cup\ldots\cup J_n$ and $K=K_1\cup\ldots\cup K_m$.
 For each pair $J_r$, $K_s$ of above intervals, we will argue that
\begin{equation}\label{slgjbnkjs}
[\rho_j(\cA(J_r)),\rho_k(\cA(K_s))]=0.
\end{equation}
Pick interior points $x\in \mathring J_r$ and $y \in \mathring K_s$.
Since $\cI$ is a $2$-cover, there exists an $i\in\cI$ such that $\{x,y\}\subset I_i$.
It follows that $J_r\cup K_s\subset I_i$.
The actions of $\cA(J_r)$ and $\cA(K_s)$ on $H$ factor through that of $\cA(I_i)$, so
equation \eqref{slgjbnkjs} follows.

Equation \eqref{slgjbnkjs} being true for every pair $J_r$, $K_s$ as above,
by strong additivity, it follows that the algebras 
$\rho_j(\cA(J))=\bigvee_r \rho_j(\cA(J_r))$ and $\rho_k(\cA(K))=\bigvee_s \rho_k(\cA(K_s))$ commute with each other.
\end{proof}

The following lemma contains the main argument of the proof of Theorem \ref{thm: easy thm}.
Let $\cA$ be a conformal net of finite index:

\begin{lem}\label{the lem}
Let $I$ be an interval and let $\cI=\{I_i\subset I\}$ be a $2$-cover.
Let $H$ be a Hilbert space equipped with actions $\rho_i:\cA(I_i)\to B(H)$ satisfying $\rho_i|_{\cA(I_i\cap I_j)}=\rho_j|_{\cA(I_i\cap I_j)}$.
Then those maps extend to an action of $\cA(I)$.
\end{lem}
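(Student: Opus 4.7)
\smallskip
\emph{Proof plan.} First I would verify the hypotheses of Lemma~\ref{Lem.1.9 of CN1}: Lemma~\ref{lem no 1} supplies $\bigcup\mathring I_i=\mathring I$, and Lemma~\ref{prep. lem} upgrades the compatibility assumption (1) to the mutual commutation condition (2). Then, applying the $2$-cover condition to the pair of endpoints $\partial I$, I would locate an element $I_0\in\cI$ with $\partial I\subset I_0$. If $I_0$ happens to be connected, then $I_0=I$ and $\rho_0$ is the required extension; otherwise $I_0$ is a multi-interval with several components, and the remaining task is to bridge its gaps.

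For each diffeomorphism $\varphi\in\mathrm{Diff}(I)$ in the identity component, Lemma~\ref{Lem.1.9 of CN1} produces a compatible extension $\hat\rho_\varphi:\cA(\varphi(I_0))\to B(H)$, and by the uniqueness clause in that lemma the various $\hat\rho_\varphi$'s agree on overlaps. By choosing $\varphi$ that compresses the gaps of $I_0$ into arbitrarily small sub-intervals, every compact multi-interval $K\subsetneq\mathring I$ is contained in $\varphi(I_0)$ for some suitable $\varphi$. Restriction therefore yields a coherent family of normal actions $\rho_K:\cA(K)\to B(H)$ on all such $K$, and by additivity \cite{FredenhagenJorss} we have $\cA(I)=\bigvee_{K\subsetneq\mathring I}\cA(K)$. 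Consequently the $\rho_K$'s determine a $*$-homomorphism on a strongly dense $*$-subalgebra of $\cA(I)$.

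The hard step is the last one: promoting this $*$-homomorphism on a strongly dense subalgebra to a genuinely normal $*$-homomorphism $\rho:\cA(I)\to B(H)$. Such an extension is not automatic for type-$\mathit{III}$ factors, and this is where the finite-index hypothesis on $\cA$ should enter essentially -- presumably via the rigidity of the representation category of a finite-index conformal net, which should force the directed family of normal representations to close up into a normal action on $\cA(I)$.
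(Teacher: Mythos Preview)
Your setup is right: Lemmas~\ref{lem no 1} and~\ref{prep. lem} feed into Lemma~\ref{Lem.1.9 of CN1}, and this does yield compatible normal actions of $\cA(J)$ on $H$ for every proper subinterval $J\subsetneq I$. The paper makes exactly this reduction.

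But the step you flag as ``hard'' is not merely hard --- it is the entire content of the lemma, and your proposed route does not work. Having normal actions of $\cA(J)$ for all $J\subsetneq I$, compatible on overlaps, gives a $*$-homomorphism on the dense $*$-subalgebra $\bigcup_J\cA(J)$, but there is no principle that promotes this to a normal map on $\cA(I)$. The Remark after Theorem~\ref{thm: main theorem} makes this failure explicit: the map $\mathrm{colim}_{\varepsilon\to 0}\,\cA([\varepsilon,1-\varepsilon])\to\cA([0,1])$ is \emph{never} an isomorphism for a nontrivial net, since the colimit has an irreducible representation on the vacuum while the type~$\mathit{III}_1$ factor $\cA([0,1])$ has none. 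Your appeal to ``rigidity of the representation category'' is not a mechanism; it is a hope.

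The paper's mechanism is completely different. After the common reduction, one takes $I=[0,5]$ with the cover containing $[0,2]\cup[3,5]$ and $[1,4]$. The finite-index hypothesis enters through \cite[Cor.~2.9]{CN2}, which rewrites the vacuum $L^2\cA([1,4])$ as a Connes fusion of two copies of itself with $L^2\cA([2,3])$. Chaining this with the tautologies $H\cong L^2\cA([1,4])\boxtimes_{\cA([1,4])}H$ and $H\cong L^2\cA([0,2]\cup[3,5])\boxtimes_{\cA([0,2]\cup[3,5])}H$, together with \cite[Cor.~1.33]{CN1} and \cite[Lem.~A.4]{CN2}, one obtains an explicit isomorphism
\[
H\;\cong\;L^2\cA([2,3])\boxtimes_{\cA([2,3]\cup_{\{2,3\}}[\overline{2,3}])}\bigl(L^2\cA([0,5])\boxtimes_{\cA([0,2]\cup[3,5])}H\bigr),
\]
on whose right-hand side $\cA([0,5])$ manifestly acts through the factor $L^2\cA([0,5])$. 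That is how finite index is used: not as abstract categorical finiteness, but to invoke a concrete bimodule identity for the vacuum sector that lets one \emph{exhibit} the $\cA(I)$-action rather than try to extend into it.
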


\begin{proof}
We may assume, without loss of generality, that the 2-cover is closed under taking subsets:
($I_i\in\cI$ and $J\subset I_i$, $J$ a multi-interval) $\Rightarrow$ ($J\in \cI$).

By Lemmas \ref{lem no 1} and \ref{prep. lem}, we are in a situation to apply Lemma \ref{Lem.1.9 of CN1}.
The latter implies that for every interval $J\varsubsetneq I$, the actions of $\cA(I_i\cap J)$ extend (uniquely) to an action of $\cA(J)$.
We may therefore assume without loss of generality that $I=[0,5]$, and that the $2$-cover contains the multi-intervals $[0,2]\cup[3,5]$ and $[1,4]$ as elements.

Recall that $L^2(-)$ is the unit for the operation $\boxtimes$ of Connes fusion.
We have
$H\,\cong\, L^2\cA([1,4])\boxtimes_{\cA([1,4])} H$,
both as $\cA([1,4])$-modules and as $\cA([0,1]\cup[4,5])$-modules.
By \cite[Cor.\,2.9]{CN2}, the vacuum sector $L^2\cA([1,4])$ is isomorphic~to
\[
L^2\cA([2,3])\boxtimes_{\cA([2,3]\cup_{\{2,3\}}[\overline{2,3}])} \big(L^2\cA([1,4])\boxtimes_{\cA([1,2]\cup[3,4])}L^2\cA([1,4])\big)
\]
as an $\cA([1,4])$-$\cA([1,4])$-bimodule (this is where we use the assumption that $\cA$ has finite index).
Here, $[\overline{2,3}]$ denotes the interval $[2,3]$ equipped with the opposite orientation, and the algebra
$\cA([2,3]\cup_{\{2,3\}}[\overline{2,3}])$ associated to the circle $[2,3]\cup_{\{2,3\}}[\overline{2,3}]$ is described in \cite[Prop.\,1.25]{CN2}.
The action of $\cA([2,3]\cup_{\{2,3\}}[\overline{2,3}])$ on $L^2\cA([1,4])\boxtimes_{\cA([1,2]\cup[3,4])}L^2\cA([1,4])$
comes from the left action of $\cA([2,3])\subset \cA([1,4])$ on the second copy of $L^2\cA([1,4])$ and the right action of $\cA([2,3])\subset \cA([1,4])$ on the first copy of $L^2\cA([1,4])$.

Let us abbreviate $\cA([a,b])$ by $\cA_{ab}$, $\cA([a,b]\cup[c,d])$  by $\cA_{ab\cup cd}$, and $\cA([a,b]\cup_{\{a,b\}}[\overline{a,b}])$ by
$\cA_{\tikz{\useasboundingbox (-.16,-.12) rectangle (.16,.13);\node at (0,0) {$\scriptstyle ab$};\node[rotate=90] at (0,.12) {$\scriptstyle )$};}}$.
We have:
\def\boxtimessubcirclealgebra{\boxtimes_{\cA_{\tikz{\useasboundingbox (-.15,-.12) rectangle (.1,.1);\node at (0,0) {$\scriptscriptstyle 23$};\node[rotate=90] at (0,.12) {$\scriptscriptstyle )$};}}}}
\begin{gather*}
H\,\cong\, L^2\cA_{14}\boxtimes_{\cA_{14}} H\quad\,\,\,\text{and}\quad\,\,\,
L^2\cA_{14}\,\cong\,L^2\cA_{23}\boxtimessubcirclealgebra \big(L^2\cA_{14}\boxtimes_{\cA_{12\cup 34}}L^2\cA_{14}\big).
\end{gather*}
Combining those two facts, one gets
\begin{equation}\label{eq: 1st set of eqs}
\begin{split}
H&\cong L^2\cA_{14}\boxtimes_{\cA_{14}}H\\
&\cong 
\big(L^2\cA_{23}\boxtimessubcirclealgebra(L^2\cA_{14}\boxtimes_{\cA_{12\cup 34}}L^2\cA_{14}) \big)\boxtimes_{\cA_{14}}H\\
&\cong\, L^2\cA_{23}\boxtimessubcirclealgebra
\big(L^2\cA_{14}\boxtimes_{\cA_{12\cup 34}}L^2\cA_{14}\boxtimes_{\cA_{14}}H\big) \\
&\cong\, L^2\cA_{23}\boxtimessubcirclealgebra
\big(L^2\cA_{14}\boxtimes_{\cA_{12\cup 34}}H\big).
\end{split}
\end{equation}
Using that $H\cong L^2\cA_{02\;\!\cup\;\! 35}\boxtimes_{\cA_{02\cup 35}}H$
and the existence of a (non-canonical) isomorphism
\[
L^2\cA_{14}\boxtimes_{\cA_{12\cup 34}}L^2\cA_{02\;\!\cup\;\! 35}
\cong
L^2\cA_{02}\boxtimes_{\cA_{12}^{\mathit{op}}}L^2\cA_{14}\boxtimes_{\cA_{34}}L^2\cA_{35}
\cong
L^2\cA_{05}
\]
which is compatible with the left actions of $\cA_{14}$ and $\cA_{01\;\!\cup\;\! 45}$ and the right actions of $\cA_{02\;\!\cup\;\! 35}$ and $\cA_{23}$ (\cite[Cor.\,1.33]{CN1} and \cite[Lem.\,A.4]{CN2}),
we get the following sequence of isomorphisms
of $\cA_{14}$- and $\cA_{01\;\!\cup\;\! 45}$-modules:
\begin{equation}\label{eq: 2nd set of eqs}
\begin{split}
H\,&\cong\, L^2\cA_{23}\boxtimessubcirclealgebra\big(L^2\cA_{14}\boxtimes_{\cA_{12\cup 34}}H\big)\\
&\cong\, L^2\cA_{23}\boxtimessubcirclealgebra\big(L^2\cA_{14}\boxtimes_{\cA_{12\cup 34}}L^2\cA_{02\cup 35}\boxtimes_{\cA_{02\cup 35}}H\big)\\
&\cong\,
L^2\cA_{23}\boxtimessubcirclealgebra\big(L^2\cA_{05}\boxtimes_{\cA_{02\cup 35}}H\big).
\end{split}
\end{equation}
The actions of $\cA_{14}$ and of $\cA_{01\;\!\cup\;\! 45}$~on 
\[
L^2\cA_{23}\boxtimessubcirclealgebra\big(L^2\cA_{05}\boxtimes_{\cA_{02\cup 35}}H\big)
\]
extend to an action of $\cA_{05}$ because they both act on $L^2\cA_{05}$.
The actions of $\cA_{14}$ and of $\cA_{01\;\!\cup\;\! 45}$~on $H$ therefore also extend to an action of $\cA_{05}$.
\end{proof}

To help the reader digest the argument in the above proof,
we include a graphical rendering of the isomorphisms which appear in \eqref{eq: 1st set of eqs} and \eqref{eq: 2nd set of eqs}:\medskip
\[
\tikz[baseline=-2]{
\fill[gray!25] (.5,0) -- +(180:.5) to[rounded corners=2] +(200:.55) to[rounded corners=2] +(220:.5) to[rounded corners=2] +(240:.55) to[rounded corners=2] +(260:.5) to[rounded corners=2] +(280:.55) to[rounded corners=2] +(300:.55) to[rounded corners=2] +(320:.5) to[rounded corners=2] +(340:.55) to[sharp corners] +(360:.5) -- cycle;
\draw (0,0) --node[below, scale=.9]{$H$} (1,0);
}
\,\,\,\cong\,\,\,
\tikz[baseline=-2]{
\fill[gray!25] (.5,0) -- +(180:.5) to[rounded corners=2] +(200:.55) to[rounded corners=2] +(220:.5) to[rounded corners=2] +(240:.55) to[rounded corners=2] +(260:.5) to[rounded corners=2] +(280:.55) to[rounded corners=2] +(300:.55) to[rounded corners=2] +(320:.5) to[rounded corners=2] +(340:.55) to[sharp corners] +(360:.5) -- cycle;
\draw[fill=gray!25] (.2,0) arc(180:0:.3);
\draw (0,0) --node[below, scale=.9]{$H$} (1,0);
}
\,\,\,\cong\,\,\,
\tikz[baseline=-2]{
\fill[gray!25] (.5,0) -- +(180:.5) to[rounded corners=2] +(200:.55) to[rounded corners=2] +(220:.5) to[rounded corners=2] +(240:.55) to[rounded corners=2] +(260:.5) to[rounded corners=2] +(280:.55) to[rounded corners=2] +(300:.55) to[rounded corners=2] +(320:.5) to[rounded corners=2] +(340:.55) to[sharp corners] +(360:.5) -- cycle;
\draw[fill=gray!25] (.2,0) arc(180:0:.3);
\draw (0,0) --node[below, scale=.9]{$H$} (1,0);
\draw (.2,0) to[in=90+45, out=90-45] (.8,0);
\draw[fill=gray!25] (.4,.111) -- (.6,.111) arc (0:180:.1 and .08);
}
\,\,\,\cong\,\,\,
\tikz[baseline=-2]{
\fill[gray!25] (.5,0) -- +(180:.5) to[rounded corners=2] +(200:.55) to[rounded corners=2] +(220:.5) to[rounded corners=2] +(240:.55) to[rounded corners=2] +(260:.5) to[rounded corners=2] +(280:.55) to[rounded corners=2] +(300:.55) to[rounded corners=2] +(320:.5) to[rounded corners=2] +(340:.55) to[sharp corners] +(360:.5) -- cycle;
\draw[fill=gray!25] (.2,0) arc(180:0:.3);
\draw (0,0) --node[below, scale=.9]{$H$} (1,0);
\draw (.4,0) arc (180:0:.1);
}
\,\,\,\cong\,\,\,
\tikz[baseline=-2]{
\fill[gray!25] (.5,0) -- +(180:.5) to[rounded corners=2] +(200:.55) to[rounded corners=2] +(220:.5) to[rounded corners=2] +(240:.55) to[rounded corners=2] +(260:.5) to[rounded corners=2] +(280:.55) to[rounded corners=2] +(300:.55) to[rounded corners=2] +(320:.5) to[rounded corners=2] +(340:.55) to[sharp corners] +(360:.5) -- cycle;
\draw[fill=gray!25] (.2,0) arc(180:0:.3 and .27);
\path (0,0) --node[below, scale=.9]{$H$} (1,0);
\draw (.4,0) arc (180:0:.1);
\draw[fill=gray!25, line join=bevel] (.6,0) to[in=90+40, out=90-40] (1,0) -- (0,0) to[in=90+40, out=90-40] (.4,0);
}
\,\,\,\cong\,\,\,
\tikz[baseline=-2]{
\fill[gray!25] (.5,0) -- +(180:.5) to[rounded corners=2] +(200:.55) to[rounded corners=2] +(220:.5) to[rounded corners=2] +(240:.55) to[rounded corners=2] +(260:.5) to[rounded corners=2] +(280:.55) to[rounded corners=2] +(300:.55) to[rounded corners=2] +(320:.5) to[rounded corners=2] +(340:.55) to[sharp corners] +(360:.5) -- cycle;
\draw[fill=gray!25, line join=bevel] (0,0) to[in=90+25, out=90-25] (1,0) -- cycle;
\path (0,0) --node[below, scale=.9]{$H$} (1,0);
\draw (.4,0) arc (180:0:.1);
}
\medskip
\]

With Lemma~\ref{the lem} in place, the proof of Theorem \ref{thm: easy thm} is now easy:

\begin{proof}[Proof of Theorem \ref{thm: easy thm}]
We first note that, by the strong additivity property of conformal nets \cite[Def.\,1.1]{CN1}, the map $q$ has dense image.
It is therefore surjective, as any morphism of von Neumann algebras whose image is dense is automatically surjective \cite[Chapt.\,III, Prop.\,3.12]{Tak1}.

To show that $q$ is injective, pick a faithful representation
\[
\pi:\mathrm{colim}\big(\{\cA(I_i\cap I_j)\}\rrarrow\{\cA(I_i)\}\big)\to B(H)
\]
and let $\rho_i:=\pi|_{\cA(I_i)}$.
By Lemma~\ref{the lem}, this extends to an action $\rho:\cA(I)\to B(H)$.
As $\pi$ is injective and $\pi=\rho\circ q$, the map $q$ is also injective.
\end{proof}

The proof of Theorem $\ref{thm: main theorem}'$ follows closely that of Theorem \ref{thm: easy thm}.

\begin{proof}[Proof of Theorem $\mathit{\ref{thm: main theorem}}'$]
Let $I$ be a multi-interval, and let $\cI=\{I_i\subset I\}$ be a $2$-cover one of whose elements contains $\partial I$.
As in the proof of Theorem \ref{thm: easy thm}, it is enough to argue that if $H$ is a Hilbert space equipped with actions $\rho_i:\cA(I_i)\to B(H)$ satisfying $\rho_i|_{\cA(I_i\cap I_j)}=\rho_j|_{\cA(I_i\cap I_j)}$, then
those extend to an action of $\cA(I)$.

Without loss of generality, we may assume that the 2-cover is closed under taking subsets:
($I_i\in\cI$ and $J\subset I_i$) $\Rightarrow$ ($J\in \cI$).
In particular, we may assume that the 2-cover admits an element which
doesn't intersect its boundary, and which has exactly one connected component in each connected component on $I$.

By Lemmas \ref{lem no 1} and \ref{prep. lem}, we are in a situation to apply Lemma \ref{Lem.1.9 of CN1}.
We can therefore assume, without loss of generality, that $I=\bigsqcup_{k=1}^n[0,5]$, and that $\cI$ contains the multi-intervals $\bigsqcup_{k=1}^n([0,2]\cup[3,5])$ and $\bigsqcup_{k=1}^n[1,4]$ as elements.

Following the structure of the proof of Lemma \ref{the lem}, we have isomorphisms:
\def\boxtimessubcirclealgebra{\boxtimes_{\cA^{\otimes\;\!\! n}_{\tikz{\useasboundingbox (-.15,-.05) rectangle (.1,.13);\node at (0,0) {$\scriptscriptstyle 23$};\node[rotate=90] at (0,.12) {$\scriptscriptstyle )$};}}}}
\[
\begin{split}
H&\textstyle\cong L^2(\cA_{14})^{\otimes n}\boxtimes_{\cA^{\otimes n}_{14}}H\\
&\textstyle\cong 
\big(L^2(\cA_{23})^{\otimes n}\boxtimessubcirclealgebra(L^2(\cA_{14})^{\otimes n}\boxtimes_{\cA^{\otimes n}_{12\cup 34}}L^2(\cA_{14})^{\otimes n}) \big)\boxtimes_{\cA^{\otimes n}_{14}}H\\
&\textstyle\cong L^2(\cA_{23})^{\otimes n}\boxtimessubcirclealgebra
\big(L^2(\cA_{14})^{\otimes n}\boxtimes_{\cA^{\otimes n}_{12\cup 34}}L^2(\cA_{14})^{\otimes n}\boxtimes_{\cA^{\otimes n}_{14}}H\big) \\
&\textstyle\cong L^2(\cA_{23})^{\otimes n}\boxtimessubcirclealgebra
\big(L^2(\cA_{14})^{\otimes n}\boxtimes_{\cA^{\otimes n}_{12\cup 34}}H\big).\\
&\textstyle\cong L^2(\cA_{23})^{\otimes n}\boxtimessubcirclealgebra\big(L^2(\cA_{14})^{\otimes n}\boxtimes_{\cA^{\otimes n}_{12\cup 34}}L^2(\cA_{02\cup 35})^{\otimes n}\boxtimes_{\cA^{\otimes n}_{02\cup 35}}H\big)\\
&\textstyle\cong
L^2(\cA_{23})^{\otimes n}\boxtimessubcirclealgebra\big(L^2(\cA_{05})^{\otimes n}\boxtimes_{\cA^{\otimes n}_{02\cup 35}}H\big)
\end{split}
\]
of $\cA(\bigsqcup^n([0,2]\cup[3,5]))$- and $\cA(\bigsqcup^n[1,4])$-modules.\smallskip

The actions of
$\cA(\bigsqcup^n([0,2]\cup[3,5]))$ and \smallskip of $\cA(\bigsqcup^n[1,4])$
on the Hilbert space
$\textstyle L^2(\bigotimes^n\!\!\cA_{23})\boxtimessubcirclealgebra\big(L^2(\bigotimes^n\!\!\cA_{05})\boxtimes_{\bigotimes^n\!\!\cA_{02\cup 35}}H\big)$
visibly extend to an action of the von Neumann algebra $\cA(\bigsqcup^n[0,5])=\cA(I)$.
%$\textstyle L^2(\bigotimes^n\!\!\cA_{23})\boxtimessubcirclealgebra\big(L^2(\bigotimes^n\!\!\cA_{05})\boxtimes_{\bigotimes^n\!\!\cA_{02\cup 35}}H\big)$.
They therefore also extend to an action of $\cA(I)$ on~$H$.
\end{proof}

\section{An application}

In our recent preprint \cite{CS(pt)}, we introduced higher categorical analogs of von Neumann algebras called \emph{bicommutant categories}.
A bicommutant category is a tensor category which is equivalent to its bicommutant inside $\mathrm{Bim}(R)$.
(The latter is the category of all bimodules over a hyperfinite factor; it plays the role of the algebra of bounded operators on a Hilbert space.)
A bicommutant category is also equipped with a higher categorical analog of a $*$-structure, called a bi-involutive structure \cite[Def.\,2.3]{BicommutantFusion}. 

In \cite{CS(pt)}, we made the following announcement:
for $G$ the group $SU(n)$ and for $k$ a positive integer,
the category of positive energy representations of the based loop group of $G$ at level $k$ is a bicommutant category. Moreover, its Drinfel'd center is the category of positive energy representations of the free loop group of~$G$:\medskip
\begin{equation}\label{eq: Z(Rep(OmG))=Rep(LG)}
Z\big(\Repposen(\Omega G)\big)\,=\,\Repposen(LG).
\medskip
\end{equation}
We then argued that this result provides good evidence for our claim that the tensor category of positive energy representation of the based loop group is the value of Chern-Simons theory on a point.
\begin{remark}
The tensor category of positive energy representations of $LG$, as defined using conformal nets (see, e.g., \cite{GabbianiFrohlich, Wassermann}), has, to our knowledge, not been compared to the corresponding tensor category defined using affine Lie algebras (or vertex algebras, or quantum groups).
The right hand side of \eqref{eq: Z(Rep(OmG))=Rep(LG)} refers to the tensor category defined in \cite{Wassermann}.
\end{remark}

\begin{remark}
We expect the relation \eqref{eq: Z(Rep(OmG))=Rep(LG)} to hold true for every compact connected Lie group $G$ and every level $k\in H^4_+(BG,\mathbb Z)$.
It is conjectured by many people that all chiral WZW conformal nets have finite index (see \cite{GabbianiFrohlich}\cite{ToledanoPhD}\cite{Wassermann}\cite[\S 4.C]{CN1}\cite[\S8]{WZW}
for the definition of these conformal nets in various degrees of generality).
The finite index property is known when $G=SU(n)$ \cite{Xu2000, Wassermann}, and in a few other cases.
Our proof of \eqref{eq: Z(Rep(OmG))=Rep(LG)} relies crucially on the fact that the chiral WZW conformal nets associated to $G$ have finite index.
However, our dependence on this result is the only place where we use that $G$ is the group $SU(n)$.
\end{remark}

We can generalize  \eqref{eq: Z(Rep(OmG))=Rep(LG)} to arbitrary conformal nets of finite index.
The analog of the tensor category of positive energy representations of $\Omega G$
is the tensor category $\cT_\cA$ of \emph{solitons} of the conformal net $\cA$ \cite[\S3.0.1]{LongoXuDichotomy}.
The claim is then that the Drinfel'd center of the tensor category of solitons of $\cA$ is the braided tensor category of representations~of~$\cA$.

By definition, a soliton is a Hilbert space equipped with compatible actions of the algebras $\cA(I)$, where $I$ ranges over all subintervals of the standard circle whose interior does not contain the base point $1\in S^1$.
Equivalently, it ranges over all subintervals $I\subsetneq S^1_{\mathrm{cut}}$,
where $S^1_{\mathrm{cut}}$ is the manifold obtained from the standard circle by removing its base point and replacing it by two points:
\[
S^1:\,\,\, \begin{matrix}\tikz{\draw circle (.6);\fill (.6,0) circle (.022);}\end{matrix}
\qquad\qquad
S^1_{\mathrm{cut}}:\,\,\, \begin{matrix} \tikz{\draw (0,0) arc(5:355:.6) coordinate (x);\fill (0,0) circle (.02) (x) circle (.02);}\end{matrix}
\]

The monoidal structure on $\cT_\cA$ is defined as follows.
Let $H$ and $K$ be two solitons. 
Let $I_+$ be the upper half of $S^1$, and let $I_-$ be its lower half.
Precomposing the left action of $\cA(I_-)$ on $H$ by the map
\[
\cA(\;\!z\mapsto\bar z\;\!:I_+\to I_-)\;:\,\,\cA(I_+)^{op}\to \cA(I_-)
\]
yields a right action of $\cA(I_+)$ on $H$.
We let
\[
H\boxtimes K:=H\boxtimes_{\cA(I_+)} K.
\]
Here, $\cA(I_+)$ acts on $K$ in the usual way, and acts on $H$ on the right via the action described above.
The left actions of $\cA(I_+)$ on $H$ and of $\cA(I_-)$ on $K$ induce corresponding actions on $H\boxtimes K$.
Given an interval $J\subset S^1$, $1\not\in J$,  $-1\in \mathring J$, 
then, by the same argument as in \cite[Def.\,1.31]{CN1}, the actions of $\cA(J\cap I_+)$ and $\cA(J\cap I_-)$ on $H\boxtimes K$ extend to an action of $\cA(J)$.
By Lemma~\ref{Lem.1.9 of CN1}, it follows that 
for \emph{every} interval $J\subsetneq S^1_{\mathrm{cut}}$, the actions of $\cA(J\cap I_+)$ and $\cA(J\cap I_-)$ on $H\boxtimes K$ extend to an action
\[
\rho_J:\cA(J)\to B(H\boxtimes K).
\]
All together, these actions equip $H\boxtimes K$ with the structure of a soliton.

Given a soliton $H$, with actions $\rho_I:\cA(I)\to B(H)$ for $I\subsetneq S^1_{\mathrm{cut}}$,
its \emph{conjugate} $\hspace{.2mm}\overline {\hspace{-.2mm}H}$ is the complex conjugate Hilbert space equipped with the actions
\[
\cA(I)\xrightarrow{\cA(z\mapsto\bar z)}\cA(\bar I)^{op}\xrightarrow{\,\,\,*\,\,\,}\overline{\cA(\bar I)}\xrightarrow{\,\,\,\overline{\rho_{\bar I}}\,\,\,}\overline{B(H)}=B(\overline H).
\]
Here, $\bar I$ denotes the image of $I\subset S^1$ under the complex conjugation map $S^1\to S^1$.
The conjugation on $\cT_\cA$ squares to the identity and satisfies $\overline{H\boxtimes K}\,\cong\, \overline K\boxtimes \overline H$. % and $\overline{\overline H}=H$.

\begin{defn}[{\cite[Def.\,5.3]{BicommutantFusion}}]
An object $\Omega$ of a tensor category $(\cT,\otimes)$ is called absorbing if it is non-zero and satisfies
\[
\qquad\qquad(X\not= 0) \,\,\Rightarrow\,\, (X\otimes \Omega \,\cong\,\Omega\,\cong\,\Omega\otimes X)\qquad\quad\forall X\in\cT.
\]
\end{defn}

\begin{remark*} If $\cT$ admits a conjugation, it is a little bit easier to check that an object is absorbing.
$\Omega\in\cT$ is absorbing if it is non-zero and if $X\otimes \Omega \,\cong\,\Omega$ for every $X\not=0$;
see the comments after \cite[Def.\,5.3]{BicommutantFusion} for a proof.
\end{remark*}

The next result, about the existence of absorbing objects, is a key ingredient in the proof, announced in \cite{CS(pt)}, that the category of solitons of a conformal net with finite index is a bicommutant category,
and that its Drinfel'd center is the category of representations of $\cA$.
The proof relies on Lemma~\ref{the lem} (which is essentially equivalent to Theorem \ref{thm: easy thm}).

Given a non-trivial conformal net $\cA$, let
\[
\Omega_\cA:=L^2(\cA(S^1_{\mathrm{cut}}))\in\cT_\cA, 
\smallskip
\]
with actions of $\cA(I)$ for $I\subsetneq S^1_{\mathrm{cut}}$ provided by the obvious inclusion $\cA(I)\to \cA(S^1_{\mathrm{cut}})$ followed by the left action of $\cA(S^1_{\mathrm{cut}})$ on its $L^2$-space.

Alternatively, the soliton $\Omega_\cA$ can be described as follows. Let its underlying Hilbert space be the vacuum Hilbert space $H_0$ of the conformal net $\cA$.
Given an interval $I=[e^{ia},e^{ib}]\subset S^1$ with $0\le a< b\le 2\pi$, let $\sqrt I:=[e^{ia/2},e^{ib/2}]$.
The square root function induces an isomorphism $\cA(I)\to\cA(\sqrt I)$.
For $I\subset S^1$, $1\not\in\mathring I$, the action of $\cA(I)$ on $H_0$ is the composite
$\cA(I)\to\cA(\sqrt I)\to B(H_0)$
of the above isomorphism with the standard action of $\cA(\sqrt I)$ on $H_0$.

The equivalence between the above two descriptions of $\Omega_\cA$ is provided by the linear map
$L^2(\cA(\tikz{\useasboundingbox (-.22,-.12) rectangle (.28,.25); \node[scale=1.1]{$\scriptstyle \sqrt{\;\!\cdot\;\!}$};})):L^2(\cA(S^1_{\mathrm{cut}}))\to L^2(\cA([e^{i0},e^{i\pi}]))=H_0$.

\begin{thm}\label{THM3}
Let $\cA$ be a non-trivial conformal net.
Then $\Omega_\cA\in\cT_\cA$ is characterized up to isomorphism by the following three properties:
\renewcommand{\labelenumi}{(\hspace{-.25mm}{\em\alph{enumi}}\hspace{.2mm})}
\begin{enumerate}
\item
it is non-zero,
\item
$\Omega_\cA\oplus \Omega_\cA\cong \Omega_\cA$, and
\item
the actions of $\cA(I)$ for $I\subset S^1$, $1\not\in\mathring I$, factor through an action of $\cA(S^1_{\mathrm{cut}})$.
\end{enumerate}
Moreover, if $\cA$ has finite index, then $\Omega_\cA$ is an absorbing object.
\end{thm}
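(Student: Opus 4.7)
My plan is to first establish the characterization of $\Omega_\cA$, then deduce the absorbing property from it. For the characterization, I would observe that $\Omega_\cA = L^2(\cA(S^1_{\mathrm{cut}}))$ satisfies (a) because $\cA$ is non-trivial, (c) by construction, and (b) because $\cA(S^1_{\mathrm{cut}})$ is a type $III_1$ factor, forcing $L^2 \oplus L^2 \cong L^2$. Conversely, any soliton $H$ satisfying (c) is in particular a normal $\cA(S^1_{\mathrm{cut}})$-module; condition (a) prevents it from being zero, and the classification of representations of type $III_1$ factors then forces $H \cong L^2(\cA(S^1_{\mathrm{cut}})) = \Omega_\cA$ as an $\cA(S^1_{\mathrm{cut}})$-module, hence as a soliton (property (b) is recorded for emphasis and becomes automatic in this setting).

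For the absorbing property, by the characterization together with the remark after the definition of absorbing (it suffices to check one-sided absorbing, thanks to the conjugation on $\cT_\cA$), the task reduces to showing that $X \boxtimes \Omega_\cA$ satisfies (a), (b), (c) for every non-zero $X \in \cT_\cA$. Non-vanishing (a) is routine since Connes fusion of non-zero modules over a factor is non-zero. Property (b) follows from distributivity of $\boxtimes$ over $\oplus$ and (b) for $\Omega_\cA$:
\[
(X \boxtimes \Omega_\cA) \oplus (X \boxtimes \Omega_\cA) \,\cong\, X \boxtimes (\Omega_\cA \oplus \Omega_\cA) \,\cong\, X \boxtimes \Omega_\cA.
\]

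The key step is (c), and it is here that finite index enters via Lemma~\ref{the lem}. I would view $S^1_{\mathrm{cut}}$ as an interval and select a finite $2$-cover $\{K_i\}$ of it by proper sub-multi-intervals (for example, two sub-intervals that each cover everything except a small neighbourhood of one endpoint, together with one multi-interval that contains small neighbourhoods of both endpoints). The soliton structure on $X \boxtimes \Omega_\cA$ provides normal actions $\rho_i: \cA(K_i) \to B(X \boxtimes \Omega_\cA)$: for a sub-interval $K_i \subsetneq S^1_{\mathrm{cut}}$ this is immediate, and for a multi-interval $K_i = K_i^1 \sqcup \cdots \sqcup K_i^m$ the action is assembled from the commuting actions of its components via the split property. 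Lemma~\ref{the lem}, applied with $I = S^1_{\mathrm{cut}}$ and cover $\{K_i\}$, then extends the $\rho_i$ to an action of $\cA(S^1_{\mathrm{cut}})$, which yields (c) and finishes the proof.

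The main obstacle I foresee is in the bookkeeping for (c): one must verify that the $\rho_i$ are well-defined on each multi-interval cover element and mutually compatible on intersections $K_i \cap K_j$. Commutativity of the actions on disjoint sub-intervals of $S^1_{\mathrm{cut}}$, needed to assemble $\rho_i$ when $K_i$ has components straddling the two halves $I_\pm$, follows from Lemma~\ref{prep. lem} applied to the finite sub-cover $\{K_i\}$ (which is a $2$-cover by construction), while compatibility on intersections is inherited directly from the soliton structure on $X \boxtimes \Omega_\cA$.
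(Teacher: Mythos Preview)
Your overall strategy matches the paper's, but there are two issues.

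First, a minor one: you assert that $\cA(S^1_{\mathrm{cut}})$ is a type $\mathit{III}_1$ factor. In the coordinate-free setup of \cite{CN1} this is not known; the paper explicitly says so in its Appendix, and Lemma~\ref{lem: type not II_1} only establishes that $\cA(S^1_{\mathrm{cut}})$ is an \emph{infinite} factor. Consequently, property (b) is not merely ``for emphasis'': for an infinite factor that is not of type~$\mathit{III}$, non-zero modules are not all isomorphic, and one genuinely needs $\Omega'\oplus\Omega'\cong\Omega'$ to single out the correct isomorphism class.

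Second, and more seriously, your verification of (c) for $X\boxtimes\Omega_\cA$ has a real gap. Any $2$-cover of $S^1_{\mathrm{cut}}$ must contain an element $K$ meeting both boundary points, and such a $K$ is necessarily disconnected, say $K=K^+\sqcup K^-$. The split property of the \emph{net} gives $\cA(K)\cong\cA(K^+)\,\bar\otimes\,\cA(K^-)$ as an abstract algebra, but it does \emph{not} say that commuting normal actions of $\cA(K^+)$ and $\cA(K^-)$ on a Hilbert space extend to a normal action of their spatial tensor product. Lemma~\ref{prep. lem} only yields commutativity, which is strictly weaker; you cannot invoke it to ``assemble'' $\rho_K$. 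The paper resolves this by a transport argument: on $\Omega_\cA$ itself, which \emph{is} an $\cA(S^1_{\mathrm{cut}})$-module, the split property of the net furnishes an intermediate type~$I$ factor $\cA(I_4)\subset N\subset \cA(I_+)'$ inside $B(\Omega_\cA)$. Because $N$ commutes with $\cA(I_+)$, its action survives Connes fusion over $\cA(I_+)$ and yields $\cA(I_4)\subset N\subset \cA(I_1)'$ inside $B(X\boxtimes\Omega_\cA)$. This intermediate type~$I$ factor is precisely what is needed to extend the commuting actions of $\cA(I_1)$ and $\cA(I_4)$ to an action of $\cA(I_1\sqcup I_4)$, after which Lemma~\ref{the lem} applies as you intended.
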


\begin{proof} %[Proof of Theorem \ref{THM3}]
We first check that $\Omega_\cA$ satisfies the above three properties.
The first one is obvious. The third one holds by construction.
The second property is a consequence of Lemma \ref{lem: type not II_1} below:
since $\cA(S^1_{\mathrm{cut}})$ is an infinite factor,
the Hilbert spaces $L^2(\cA(S^1_{\mathrm{cut}}))\oplus L^2(\cA(S^1_{\mathrm{cut}}))$ and $L^2(\cA(S^1_{\mathrm{cut}}))$ are isomorphic as left $\cA(S^1_{\mathrm{cut}})$-modules.
It follows that $\Omega_\cA\oplus \Omega_\cA\cong \Omega_\cA$.

Let $\Omega'$ be another soliton that satisfies the same three properties.
Then $\Omega'$ is naturally an $\cA(S^1_{\mathrm{cut}})$-module.
By the classification of modules over factors, there is a unique non-zero $\cA(S^1_{\mathrm{cut}})$-module (up to isomorphism) that satisfies $\Omega'\oplus \Omega'\cong \Omega'$.
It follows that $\Omega'\,\cong\, \Omega_\cA$.

Let us now assume that $\cA$ has finite index.
Given a non-zero soliton $X$, we need to show that $X\boxtimes \Omega_\cA\cong \Omega_\cA$.
Equivalently, we need to show that $X\boxtimes \Omega_\cA$ satisfies the three properties listed above.
The first property, $X\boxtimes \Omega_\cA\not =0$, holds because fusing over a factor sends non-zero Hilbert spaces to non-zero Hilbert spaces (see, e.g., \cite[Prop.\,5.2]{dualizability}).
The second property, $X\boxtimes \Omega_\cA\oplus X\boxtimes \Omega_\cA\cong X\boxtimes \Omega_\cA$, is an immediate consequence of the corresponding property of $\Omega_\cA$.
The third property is more tricky and its verification will occupy the rest of this proof.

Let $\cI_0$ be the collection of all subintervals of $S^1$ whose interior does not contain the base point $1\in S^1$.
Equivalently, $\cI_0$ is the collection of all subintervals $I\subsetneq S^1_{\mathrm{cut}}$.
By definition, a soliton is a Hilbert space equipped with compatible actions of all the algebras $\cA(I)$ for $I\in\cI_0$. 
Let $I_1:=[e^{i0},e^{i\pi/2}]$ and $I_4:=[e^{i3\pi/2},e^{i2\pi}]$ be the first and fourth quadrants of the standard circle, and let $I_{14}:=I_1\sqcup I_4\subset S^1_{\mathrm{cut}}$ be their disjoint union
(whereas $I_1$ and $I_4$ are not disjoint in $S^1$, these intervals \emph{are} disjoint when viewed as subsets of $S^1_{\mathrm{cut}}$).
The collection $\cI_0$ is not a $2$-cover of $S^1_{\mathrm{cut}}$, because no element of $\cI_0$ contains $\partial S^1_{\mathrm{cut}}$.
But
\[
\cI:=\cI_0\cup\{I_{14}\}
\]
is a $2$-cover.

Recall that, by definition, $X\boxtimes \Omega_\cA=X\boxtimes_{\cA(I_+)} \Omega_\cA$.
By the split property, the actions of $\cA(I_+)$ and of $\cA(I_4)$ on $\Omega_\cA$ extend to an action of their spatial tensor product.
Equivalently, there exists an intermediate type $I$ factor:
\[
\qquad\qquad\cA(I_4)\subset N\subset \cA(I_+)'\qquad \text{\small(commutant inside $B(\Omega_\cA)$)}.
\]
The action of $N$ on $\Omega_\cA$ commutes with that of $\cA(I_+)$ and thus induces an action on $X\boxtimes_{\cA(I_+)} \Omega_\cA$.
The latter commutes with the action of $\cA(I_1)$ coming from $X$, so we get an intermediate type $I$ factor:
\[
\qquad\qquad\cA(I_4)\subset N\subset \cA(I_1)'\qquad \text{\small(commutant inside $B(X\boxtimes_{\cA(I_+)} \Omega_\cA)$)}.
\]
Equivalently, the actions of $\cA(I_1)$ and $\cA(I_4)$ on $X\boxtimes_{\cA(I_+)} \Omega_\cA$ extend to their spatial tensor product $\cA(I_{14})$.

The above action of $\cA(I_{14})$ on $X\boxtimes \Omega_\cA$, together with the actions of $\cA(I)$ for $I\in \cI_0$ coming from the fact that $X\boxtimes \Omega_\cA$ is a soliton,
assemble to a compatible family of actions
\[
\qquad\qquad\rho_I:\cA(I)\to B(X\boxtimes \Omega_\cA)\qquad\quad \forall I\in \cI.
\]
Finally, by Lemma~\ref{the lem}, since $\cI$ is a 2-cover, these extend to an action of $\cA(S^1_{\mathrm{cut}})$.
This finishes the proof of condition (\hspace{-.2mm}{\em c}\hspace{.2mm}).
\end{proof}

\section{Appendix}

For classical conformal nets, it is well known that, unless $\cA(I)=\mathbb C$ for all $I$, the algebras $\cA(I)$ are hyperfinite $\mathit{III}_1$ factors \cite[Thm.\,2.13]{GabbianiFrohlich}.
Hyperfiniteness is a formal consequence of the split property, and holds equally well in the coordinate free setup (i.e., for conformal net as in \cite[Def.\,1.1]{CN1}).
Indeed, given an interval $I$, write $\mathring I=\bigcup I_n$ with $I_n\subset \mathring I_{n+1}$.
By the split property, there exist intermediate type $I$ subfactors $\cA(I_n)\subset N_n \subset \cA(I_{n+1})$, and so $\cA(I)=\bigvee N_n$ is hyperfinite.

We do not know how to prove the type $\mathit{III}_1$ property in the coordinate free setup.
The following lemma is the best we can offer:

\begin{lem}\label{lem: type not II_1}
Let $\cA$ be a non-trivial conformal net.
Then the algebras $\cA(I)$ are infinite factors (they are infinite dimensional, and they are not of type $\mathit{II}_1$).
\end{lem}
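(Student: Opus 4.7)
\smallskip
\noindent\textbf{Proof proposal.} The plan is to separately establish infinite-dimensionality and failure of type $\mathit{II}_1$, using only two ingredients: diffeomorphism covariance and the split property.

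For infinite-dimensionality, I would argue by contradiction. If $\cA(I)$ is finite-dimensional, then being a factor it is $M_n(\mathbb C)$ with $n \ge 2$ (by non-triviality of $\cA$), and by diffeomorphism covariance the same holds for every interval $J$. Picking disjoint sub-intervals $J_1, J_2 \subset I$, the split property (symmetric monoidality of $\cA$ on multi-intervals) produces a unital embedding $\cA(J_1)\otimes\cA(J_2) = \cA(J_1\sqcup J_2)\hookrightarrow\cA(I)$, i.e., $M_{n^2}(\mathbb C)\hookrightarrow M_n(\mathbb C)$, contradicting $n^2 > n$.

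To rule out type $\mathit{II}_1$, I would show $\cA(I)$ contains a unital copy of $B(H)$ for some infinite-dimensional Hilbert space $H$, which forces $1_{\cA(I)}$ to be an infinite projection. Pick $I_1 \subset I$ with $I_1 \subset \mathring I$; the split property supplies an intermediate type $I$ factor $N$ with $\cA(I_1)\subset N \subset \cA(I)$ (exactly the form of the split property used in the preceding paragraph of the appendix, where hyperfiniteness is proved). By the previous paragraph applied to $I_1$ (diffeomorphic to $I$), the algebra $\cA(I_1)$ is infinite-dimensional, hence so is $N$, so $N\cong B(H)$ with $\dim H = \infty$. A non-surjective isometry $v\in B(H) = N\subset \cA(I)$ then witnesses $1_{\cA(I)}$ as an infinite projection.

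I do not anticipate a serious obstacle: both invocations of the split property are standard in the coordinate-free setup of \cite{CN1}, and the second -- the existence of an intermediate type $I$ factor between nested interval algebras -- is used explicitly in the preceding paragraph of the appendix, so it may be cited rather than re-derived.
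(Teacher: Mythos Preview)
Your proposal is correct. The infinite-dimensionality argument is essentially the same idea as the paper's (the existence of many non-trivial commuting subalgebras), just written out more explicitly via the $M_{n^2}\hookrightarrow M_n$ contradiction.

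For the exclusion of type $\mathit{II}_1$, however, your route is genuinely different from the paper's. The paper argues on the vacuum sector: assuming $\cA(I_0)$ is $\mathit{II}_1$, the coupling constant $\dim_{\cA(I_0)}H_0$ equals~$1$ (since $H_0=L^2\cA(I_0)$), and diffeomorphism covariance forces $\dim_{\cA(I)}H_0=1$ for \emph{every} $I\subset S^1$; the multiplicativity relation $\dim_{\cA(I)}H_0=[\cA(J){:}\cA(I)]\cdot\dim_{\cA(J)}H_0$ then gives $[\cA(J){:}\cA(I)]=1$ for any $I\subsetneq J$, contradicting strict inclusion. Your argument instead stays inside the algebra: the split property hands you an intermediate type~$I$ factor $N$ with $\cA(I_1)\subset N\subset\cA(I)$, infinite-dimensionality of $\cA(I_1)$ forces $N\cong B(H)$ with $\dim H=\infty$, and a non-unitary isometry in $N$ exhibits $1_{\cA(I)}$ as an infinite projection. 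This is more elementary---it uses only the split property (already on the table from the hyperfiniteness paragraph) and the basic dichotomy for type~$I$ factors, bypassing the vacuum sector, coupling constants, and the index formula. The paper's argument, on the other hand, ties the result directly to the module theory of the vacuum and would generalise more readily to statements about relative dimensions.
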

\begin{proof}
The algebra $\cA(I)$ is infinite dimensional as it contains infinitely many non-trivial commuting subalgebras. 

Let $I_0$ be the upper half of the standard circle, so that the vacuum sector $H_0$ is $L^2(\cA(I_0))$.
Assume by contradiction that the algebra $\cA(I_0)$ is of type $\mathit{II}_1$.
Then the von Neumann dimension of $H_0$ as an $\cA(I_0)$-module is equal to $1$.
By diffeomorphism covariance, for every interval $I\subset S^1$, the dimension of $H_0$ as an $\cA(I)$-module is also $1$.
Given two intervals $I\subsetneq J\subset S^1$, we have
\[
\dim_{\cA(I)}(H_0)\,=\,[\cA(J):\cA(I)]\cdot\dim_{\cA(J)}(H_0).
\]
It follows that $[\cA(J):\cA(I)]=1$. The inclusion $\cA(I)\to \cA(J)$ is therefore an isomorphism, a contradiction.
\end{proof}

\section{Acknowledgements}
\noindent
This project has received funding from the European Research Council (ERC)
under the European Union's Horizon 2020 research and innovation programme (grant agreement No 674978).

%\end{talk}

\end{document}